\documentclass[
copyright,creativecommons]{eptcs}
\providecommand{\event}{GandALF 2026} 

\usepackage{iftex}

\ifpdf
  \usepackage{underscore}         
  \usepackage[T1]{fontenc}        
\else
  \usepackage{breakurl}           
\fi

\usepackage{amsmath,amssymb,amsthm}

\usepackage{longtable}
\usepackage{tabularx}
\usepackage{booktabs}
\usepackage{arydshln}
\usepackage{diagbox}

\usepackage{graphicx}
\usepackage{tikz}
\usetikzlibrary{arrows,automata,positioning}

\tikzset{
  position/.style args={#1:#2 from #3}{
    at=(#3.#1),
    anchor=#1+180,
    shift=(#1:#2)
  }
}

\usepackage{algorithm}
\usepackage{algpseudocode} 

\usepackage{enumitem}
\usepackage{multicol}

\usepackage{lineno}

\usepackage{xcolor}
\usepackage{hyperref}
\usepackage[nameinlink,noabbrev]{cleveref}

\hypersetup{
  colorlinks=true,
  linkcolor=blue,
  citecolor=blue,
  urlcolor=blue
}

\theoremstyle{plain}
\newtheorem{theorem}{Theorem}[section]

\newtheorem{fact}[theorem]{Fact}

\theoremstyle{definition}

\newtheorem{definition}[theorem]{Definition}

\newtheorem{example}[theorem]{Example}

\usepackage[numbers]{natbib}

\usepackage{xspace}

\newcommand{\Flg}{\mathcal{G}}
\newcommand{\Fln}{\mathcal{N}}

\newcommand{\FGCGF}{\mathtt{GCGF}}

\newcommand{\FXX}{\mathtt{X}}

\newcommand{\FCL}{\mathsf{CL}}

\newcommand{\FAG}{\mathtt{AG}}
\newcommand{\FAC}{\mathtt{AC}}
\newcommand{\FST}{\mathtt{ST}}

\newcommand{\Fav}{\mathtt{av}}  

\newcommand{\Fout}{\mathsf{out}}

\newcommand{\FJA}{\mathtt{JA}}

\newcommand{\FCC}{\mathrm{C}}

\newcommand{\FDD}{\mathrm{D}}

\newcommand{\FES}{\mathrm{ES}}

\newcommand{\Fdefs}[1]{\textbf{#1}}

\newcommand{\putaway}[1]{}

\title{Representation Theorems for Actual Powers \\ over Two-Agent General Concurrent Game Frames}

\author{
Zixuan Chen
\institute{Institute for Logic, Language and Computation, University of Amsterdam, Amsterdam, The Netherlands}
\email{zixuan.chen21@outlook.com}
\and
Fengkui Ju\thanks{Corresponding author}
\institute{School of Philosophy, Beijing Normal University, Beijing, China}
\email{fengkui.ju@bnu.edu.cn}
\and
Thomas {\AA}gotnes
\institute{Department of Information Science and Media Studies, University of Bergen, Bergen, Norway}
\institute{School of Philosophy, Shanxi University, Taiyuan, China}
\email{thomas.agotnes@uib.no}
}

\newcommand{\titlerunning}{Representation Theorems for Actual Powers}
\newcommand{\authorrunning}{Z. Chen, F. Ju \& T. {\AA}gotnes}

\hypersetup{
  bookmarksnumbered,
  pdftitle    = {\titlerunning},
  pdfauthor   = {\authorrunning},
  pdfsubject  = {EPTCS},               
  pdfkeywords = {general concurrent game frames, actual power, neighborhood frames, representation theorem} 
}

\begin{document}

\maketitle

\begin{abstract}
One of the most well-known connections between logic and games is Pauly's representation theorem: that the induced powers of individuals and coalitions in concurrent game frames correspond, in a precise sense, to a certain class of neighborhood frames. The precise sense here is what is called \emph{alpha effectivity} (or \emph{alpha power}): the power of a coalition is characterized by the sets of states which it can ensure the outcome to lie in by taking some joint action. This definition is inherently monotonic, and, as pointed out by \cite{benthem-new-2019}, that fact can obscure relevant information about the power structure in the game: we don't know whether two sets a coalition has the power to enforce correspond to the same or different joint actions. An alternative is to characterise the power of a coalition by its \emph{actual power} (called \emph{basic power} in \cite{benthem-new-2019}): the set of sets of states where each corresponds to one joint action by the coalition and all possible joint actions by the other agents.
It has recently been argued \cite{li-minimal-2025, li-completeness-2026} that standard concurrent game frames rely on three assumptions that in some cases may be too strong: seriality, independence of agents, and determinism. This gives a total of eight different classes of \emph{general} concurrent game frames.
In this paper, assuming two agents, we prove that for actual powers, the eight classes of general concurrent game frames are represented by eight corresponding classes of neighborhood frames.
This generalises a result in \cite{benthem-new-2019}.

\end{abstract}

\section{Introduction}


\emph{Concurrent game frames} are a standard semantic framework for logics of strategic reasoning. Many influential logics in this area are interpreted over concurrent game frames, including Coalition Logic $\mathsf{CL}$ \cite{pauly-modal-2002} and Alternating-time Temporal Logic $\mathsf{ATL}$ \cite{alur-alternating-time-2002}. Roughly speaking, a concurrent game frame consists of states and agents (forming coalitions), where, at each state, every coalition has available joint actions, and each such action induces a set of possible outcome states.

Two notions of coalition power arise from concurrent game frames: \emph{alpha powers} and \emph{actual powers}. An \emph{alpha power} of a coalition is a set of possible futures such that the coalition has an action that forces every resulting outcome to lie in that set. Alpha power is also known as alpha effectivity. An \emph{actual power} of a coalition is a set of possible futures such that the coalition has an action satisfying both of the following conditions: (1) the action forces every resulting outcome to lie in the set, and (2) every future state in the set is realizable by the complement coalition (and the environment). Thus, unlike alpha powers, actual powers explicitly involve agents outside the coalition. Note that actual powers contain more information about the power structure than alpha powers: we can get the latter from the former, but not the other way around.

Correspondingly, one obtains two kinds of neighborhood frames: \emph{alpha neighborhood frames} and \emph{actual neighborhood frames}. In an alpha neighborhood frame, each coalition is assigned an \emph{alpha neighborhood function} that, at every state, specifies an \emph{alpha neighborhood}, i.e., a set of alpha powers. In an actual neighborhood frame, each coalition is assigned an \emph{actual neighborhood function} that, at every state, specifies an \emph{actual neighborhood}, i.e., a set of actual powers.

\emph{Alpha powers} are widely used in game theory; see, e.g.,~\cite{moulin-cores-1982}. 
By contrast, there appears to be comparatively little work on \emph{actual} powers. 
To the best of our knowledge, the main reference is van Benthem, Bezhanishvili, and Enqvist~\cite{benthem-new-2019}. 
Working primarily from the perspective of game equivalence, they investigate \emph{basic} powers of singleton coalitions in turn-based two-agent extensive games with imperfect information, defined in terms of (uniform) strategies. 
Conceptually, this notion of basic power coincides with what we call \emph{actual} power.

Alpha and actual powers abstract away from the underlying actions in different
ways.
Alpha powers record guarantees: if a coalition has an action whose possible
outcomes are contained in a set \(X\), then the same action also guarantees any
superset of \(X\). Hence alpha powers are upward closed.
Actual powers, by contrast, record the precise outcome set generated by an
available coalition action: the set of states left possible once that action
is fixed and the choices of the remaining agents are left open. They are
therefore generally not upward closed.
As observed in~\cite{benthem-new-2019}, describing a game solely in terms of
its alpha powers can therefore lose information about its power structure,
whereas actual powers provide a finer-grained account.

Because actual powers retain the outcomes left open to agents outside the
coalition, they are useful for modeling social scenarios in which a
coalition's ability depends not only on what it can guarantee, but also on what
its action leaves open to others. A central logical example is Socially
Friendly Coalition Logic~\cite{goranko-socially-2018}. Its operator
$[\FCC](\phi;\psi_1,\dots,\psi_k)$ expresses that coalition $\FCC$ has a
collective action $\sigma_\FCC$ that guarantees $\phi$ while allowing the
complementary coalition $\overline{\FCC}$ to realize each of the alternatives
$\psi_1,\dots,\psi_k$ by a suitable response. This semantic clause is the
coalition-indexed, action-based counterpart of the instantial neighborhood
operator
$\Box(\phi;\psi_1,\dots,\psi_k)$ of~\cite{benthem-instantial-2017}.
Equivalently, in actual-neighborhood terms, there must be an actual power $X$
of $\FCC$ such that every state in $X$ satisfies $\phi$ and, for every
$i\leq k$, $X$ contains some state satisfying $\psi_i$. The latter requirement
depends on the precise outcome range and cannot be captured by alpha powers
alone.


\medskip

Every concurrent game frame induces an alpha neighborhood frame.
Pauly~\cite{pauly-modal-2002} and Goranko, Jamroga, and Turrini~\cite{goranko-strategic-2013} showed that the class of concurrent game frames is \emph{represented} by a class of alpha neighborhood frames characterized by certain \emph{basic} properties of neighborhood functions, in the following sense:
(1) every alpha neighborhood frame induced by a concurrent game frame satisfies these properties, and
(2) every alpha neighborhood frame satisfying these properties is induced by some concurrent game frame.
This result is the \emph{alpha representation theorem} for concurrent game frames.

Every concurrent game frame also induces an actual neighborhood frame. Do there exist \emph{basic} properties of neighborhood functions such that
(1) every actual neighborhood frame induced by a concurrent game frame satisfies these properties, and
(2) every actual neighborhood frame satisfying these properties is induced by some concurrent game frame?
If so, this would be an \emph{actual representation theorem} for concurrent game frames.

This question has not yet been fully settled in the literature. As far as we are aware, the only closely related result is due to van Benthem, Bezhanishvili, and Enqvist~\cite{benthem-new-2019}. There, the authors establish a representation theorem for \emph{basic} (strategy-based) powers in turn-based two-agent extensive games with imperfect information, under the restriction to singleton coalitions. Via a standard power-invariance transformation between the class of two-agent concurrent game frames and an appropriate class of turn-based two-agent extensive games with imperfect information, their proof readily yields a representation theorem for actual powers in two-agent concurrent game frames, again restricted to singleton coalitions (see ~\cite{chen-representation-2026} for details).

Let us briefly discuss the significance of representation theorems for alpha and actual powers.
What are the powers of players, and coalitions of players, in games? Which sets of outcomes -- known as \emph{events} in probability theory or \emph{propositions} in logic -- is it in their power to make come about, in the sense that they can choose a (joint) action such that no matter what the other agents do, this is what will happen? This abstracts away the particulars of a game, and leaves an abstract representation of power which is exactly what we are often interested in when we analyse games. In order to understand power in games, the central question then is: what are the \emph{properties} (or \emph{axioms} or \emph{laws}) of individual and coalitional power induced from games in this way? That is exactly what Pauly's representation theorem \cite{pauly-modal-2002} answers, for alpha powers.
This is the main motivation, and the main significance of also the representation theorems for actual power: giving us a deeper understanding of the power in games.

Representation theorems can also yield alternative neighborhood semantics for some
logics originally interpreted over concurrent game models. This makes the whole toolchest developed to deal with neighbourhood semantics available. Such semantics are also 
often more economical and convenient. Whereas a concurrent game frame
explicitly records action names, availability functions, and outcome functions,
many modal languages for strategic reasoning depend only on the powers induced
by actions: what coalitions can force in the case of alpha powers, or the exact
outcome ranges generated by coalition actions in the case of actual powers.
Representation theorems therefore justify replacing action-based models with
neighborhood-based models, without changing truth conditions.
Pauly's representation theorem for alpha powers provides a prominent example:
it allows the semantics of $\FCL$ over concurrent game frames to be transferred
to alpha neighborhood frames. Pauly~\cite{pauly-modal-2002} used such frames to
prove completeness of $\FCL$, while {\AA}gotnes and
Alechina~\cite{agotnes-coalition-2019} employed the same semantic framework to
extend $\FCL$ with knowledge.


\medskip

Li and Ju~\cite{li-minimal-2025,li-completeness-2026} argued that standard
concurrent game frames impose three assumptions that may be too strong in
practice: \emph{seriality}, \emph{independence of agents}, and
\emph{determinism}. For motivating examples, see~\cite{li-minimal-2025}. They
introduced eight classes of general concurrent game frames, corresponding to
the eight possible combinations of these assumptions, and studied the
generalized coalition logic associated with each class.

In this paper, assuming two agents, we establish the following result:
for actual powers, the eight classes of general concurrent game frames are represented by eight corresponding classes of actual neighborhood frames.
Our proof method relies essentially on the two-agent setting, and does not extend directly to settings with three or more agents. We briefly explain this at the end of Section \ref{section: Representing actual powers}.

The remainder of the paper is organized as follows.
In Section~\ref{section: General settings}, we introduce the general framework
used throughout the paper, including action frames, alpha and actual powers,
actual neighborhood frames, and representability of action frames by actual
neighborhood frames.
In Section~\ref{section: Eight classes of general concurrent game frames}, we define the eight classes of general concurrent game frames.
In Section~\ref{section: Representing actual powers}, we define eight classes of actual neighborhood frames via eight finite sets of properties of actual neighborhood functions, and show that, for actual powers, the eight classes of general concurrent game frames (with two agents) are represented by the corresponding eight classes of actual neighborhood frames.
In Section~\ref{section: Concluding remarks}, we conclude the paper with final remarks.


Due to space limitations, we omit some material that can be found in the full paper~\cite{chen-representation-2026}. Moreover, building on the representation result for actual powers established in this paper, the full paper shows that, for alpha powers, the same eight classes are represented by eight corresponding classes of alpha neighborhood frames.

\newcommand{\FAF}{\mathtt{AF}}

\section{Preliminaries}
\label{section: General settings}

In this section, we proceed as follows. First, we introduce \emph{action
frames} in an abstract form; specific constraints will be imposed later.
Second, we discuss \emph{alpha powers} and \emph{actual powers}, and briefly
compare them. Third, we present \emph{actual neighborhood frames}. These are
again introduced at an abstract level, with additional constraints to be added
in subsequent sections. Finally, we define how action frames are represented
by actual neighborhood frames.

\subsection{Action frames}

Let $\FAG$ be a finite nonempty set of agents. Subsets of $\FAG$ are called \Fdefs{coalitions}, and $\FAG$ itself is called the \Fdefs{grand coalition}.
In what follows, whenever no confusion arises, we write $a$ instead of $\{a\}$ for $a \in \FAG$.

Let $\FAC$ be a nonempty set of actions. For each $\FCC \subseteq \FAG$, define
$\FJA_\FCC = \{\sigma_\FCC \mid \sigma_\FCC: \FCC \rightarrow \FAC\}$,
the set of \Fdefs{joint actions} of $\FCC$. Note $\FJA_\emptyset =\{\emptyset\}$. For $\FCC, \FDD \subseteq \FAG$ such that $\FCC \subseteq \FDD$, and for every $\sigma_\FDD \in \FJA_\FDD$, we denote by $\sigma_\FDD|_\FCC$ the \Fdefs{restriction} of $\sigma_\FDD$ to $\FCC$, which is an element of $\FJA_\FCC$.
In the sequel, we sometimes represent joint actions of coalitions as action sequences, implicitly assuming a fixed order of agents.

\begin{definition}[Action frames]

An \Fdefs{action frame} is a tuple
\[
\FAF = (\FST, \FAC, \{\Fav_\FCC \mid \FCC \subseteq \FAG\}, \{\Fout_\FCC \mid \FCC \subseteq \FAG\}),
\]
where:

\begin{itemize}

\item
$\FST$ is a nonempty set of states.

\item
$\FAC$ is a nonempty set of actions.

\item
For each $\FCC \subseteq \FAG$, $\Fav_\FCC: \FST \rightarrow \mathcal{P}(\FJA_\FCC)$ is an \Fdefs{availability function} for $\FCC$.

\emph{Here, $\Fav_\FCC(s)$ is the set of all joint actions of coalition $\FCC$ that are available at state $s$.}

\item
For each $\FCC \subseteq \FAG$, $\Fout_\FCC: \FST \times \FJA_\FCC \rightarrow \mathcal{P}(\FST)$ is an \Fdefs{outcome function} for $\FCC$.

\emph{Here, $\Fout_\FCC(s, \sigma_\FCC)$ is the set of possible outcome states when coalition $\FCC$ performs joint action $\sigma_\FCC$ at state $s$.}

\end{itemize}

\end{definition}

Note that the action set $\FAC$ may vary across action frames.

Different constraints may be imposed on action frames.

Intuitively, $\Fout_\emptyset(s,\emptyset)$ contains all states to which $s$ may evolve, i.e., all possible successors of $s$.

\subsection{Alpha and actual powers}

Fix an action frame
\[
\FAF = (\FST, \FAC, \{\Fav_\FCC \mid \FCC \subseteq \FAG\}, \{\Fout_\FCC \mid \FCC \subseteq \FAG\}),
\]
a state $s \in \FST$, a coalition $\FCC \subseteq \FAG$, and an available joint action $\sigma_\FCC \in \Fav_\FCC(s)$.
We say that a set of states $X \subseteq \FST$ is \Fdefs{safe} for $\sigma_\FCC$ at $s$ if
$\Fout_\FCC(s,\sigma_\FCC)\subseteq X$,
that is, performing $\sigma_\FCC$ at $s$ guarantees that the next state lies in $X$.
We say that $X$ is \Fdefs{tight} for $\sigma_\FCC$ at $s$ if
$X\subseteq \Fout_\FCC(s,\sigma_\FCC)$,
that is, every state in $X$ can occur as an outcome compatible with performing $\sigma_\FCC$ at $s$.

An \Fdefs{alpha power} of $\FCC$ at $s$ is a set of states $X \subseteq \FST$ that is safe for some available joint action of $\FCC$ at $s$.
An \Fdefs{actual power} of $\FCC$ at $s$ is a set of states $X \subseteq \FST$ that is both safe and tight for some available joint action of $\FCC$ at $s$. Equivalently, actual powers of $\FCC$ at $s$ are exactly the sets of the form $\Fout_\FCC(s,\sigma_\FCC)$ for $\sigma_\FCC \in \Fav_\FCC(s)$.
We say that an (alpha or actual) power $X$ \Fdefs{enables} a state $t \in \FST$ at $s$ if $t \in X$.

Alpha powers are monotonic: if $X$ is an alpha power, then every superset $X' \supseteq X$ is also an alpha power.
By contrast, actual powers are generally non-monotonic, since tightness rules out adding states that are not actually reachable.

The following example illustrates the distinction between alpha and actual
powers.

\begin{example}
\label{ex:01-same-alpha-powers-different-actual-powers}

Consider two AI-driven robots, $a$ and $b$, controlling a warehouse gate. The
robots send their commands to the gate controller simultaneously, but robot
$a$ has higher authority than robot $b$.

There are two relevant states, \(W=\{w_1,w_2\}\), where $w_1$ represents the
gate being closed and $w_2$ represents the gate being open. Robot $a$ has three
actions:
\(\mathtt{open}_a\), \(\mathtt{close}_a\), and
\(\mathtt{delegate}_a\). The first opens the gate, the second closes it, and
the third delegates the decision to robot $b$. Robot $b$ has two actions:
\(\mathtt{open}_b\) and \(\mathtt{close}_b\).

We consider two scenarios. In both, if $a$ chooses \(\mathtt{open}_a\), the
gate opens, and if $a$ chooses \(\mathtt{close}_a\), the gate closes,
regardless of $b$'s action. The scenarios differ only in the effect of
\(\mathtt{delegate}_a\). In the first scenario, delegation works as intended:
when $a$ delegates, $b$'s action determines whether the gate opens or closes.
In the second scenario, delegation is ineffective: when $a$ delegates, the
gate closes regardless of $b$'s action.

The two scenarios induce the same alpha powers for every coalition:
\[
\begin{array}{c|c}
\text{coalition} & \text{alpha powers} \\
\hline
\emptyset & \{W\} \\
\{a\} & \bigl\{\{w_1\},\{w_2\},W\bigr\} \\
\{b\} & \{W\} \\
\{a,b\} & \bigl\{\{w_1\},\{w_2\},W\bigr\}.
\end{array}
\]

Their actual powers differ. In the first scenario, the actual
powers of \(\{a\}\) are
\(\bigl\{\{w_1\},\{w_2\},W\bigr\}\), corresponding to closing, opening, and
delegating, respectively. In the second scenario, they are only
\(\bigl\{\{w_1\},\{w_2\}\bigr\}\). Indeed, delegation no longer leaves both
outcomes open, but instead has the same exact outcome set as
\(\mathtt{close}_a\).

\end{example}

\newcommand{\FACNF}{\mathtt{NF}^\mathtt{ac}}
\newcommand{\FALNF}{\mathtt{NF}^\alpha}

\newcommand{\Facnei}{\mathrm{N}^\mathrm{ac}}
\newcommand{\Falnei}{\mathrm{N}^\alpha}

\newcommand{\Falneinc}{\mathrm{CoreN}^\alpha}

\subsection{Actual neighborhood frames}

\begin{definition}[Actual neighborhood frames]

An \Fdefs{actual neighborhood frame} is a tuple
\[
\FACNF = (\FST, \{\Facnei_\FCC \mid \FCC \subseteq \FAG\}),
\]
where:

\begin{itemize}

\item
$\FST$ is a nonempty set of states;

\item
For each $\FCC \subseteq \FAG$, $\Facnei_\FCC: \FST \to \mathcal{P}(\mathcal{P}(\FST))$ is an \Fdefs{actual neighborhood function} of $\FCC$.

\end{itemize}

\end{definition}

Different constraints may be imposed on actual neighborhood frames.
The set $\Facnei_\FCC(s)$ is called the \Fdefs{actual neighborhood} of $\FCC$ at $s$. Intuitively, $\Facnei_\FCC(s)$ consists of the actual powers of coalition $\FCC$ at state $s$.
In the intended interpretation, $\bigcup \Facnei_\emptyset(s)$ consists of all states to which $s$ may evolve, i.e., the possible successors of $s$.

Structurally, actual neighborhood frames are ordinary coalition-indexed neighborhood frames. The qualifier “actual” merely emphasizes that their neighborhoods are interpreted as representing the actual powers of coalitions.

\newcommand{\ab}{\allowbreak}

\newcommand{\FALEF}{\mathrm{EF}^\alpha}
\newcommand{\FACEF}{\mathrm{EF}^\mathrm{ac}}
\newcommand{\PAL}{\alpha}

\newcommand{\PAC}{\mathtt{AC}}

\subsection{Representation of action frames by actual neighborhood frames}

\begin{definition}[Actual effectivity functions of action frames]

Let $\FAF = (\FST, \FAC, \{\Fav_\FCC \mid \FCC \subseteq \FAG\}, \{\Fout_\FCC \mid \FCC \subseteq \FAG\})$ be an action frame.
For each $\FCC \subseteq \FAG$, define:

\begin{itemize}

\item
the \Fdefs{actual effectivity function} $\FACEF_\FCC$ for $\FCC$ in $\FAF$ by setting, for every $s \in \FST$,
\[
\FACEF_\FCC(s)
=
\{\, \Fout_\FCC(s,\sigma_\FCC) \mid \sigma_\FCC \in \Fav_\FCC(s) \,\}.
\]

\end{itemize}

\end{definition}

\begin{definition}[$\PAC$-representability of action frames]

Let $\FAF = (\FST, \FAC, \ab \{\Fav_\FCC \mid \FCC \subseteq \FAG\}, \{\Fout_\FCC \mid \FCC \subseteq \FAG\})$ be an action frame.

We say that $\FAF$ is \Fdefs{$\PAC$-represented} by an actual neighborhood frame
\[
\FACNF = (\FST, \{\Facnei_\FCC \mid \FCC \subseteq \FAG\})
\]
if, for every $\FCC \subseteq \FAG$, we have $\FACEF_\FCC = \Facnei_\FCC$.

\end{definition}

Intuitively, $\PAC$-representability means that the actual neighborhood frame
captures all information about the actual powers of coalitions in the action
frame.

\begin{definition}[$\PAC$-representability of classes of action frames]

Let $\mathbf{AF}$ be a class of action frames.
We say that $\mathbf{AF}$ is \Fdefs{$\PAC$-represented} by a class of actual
neighborhood frames $\mathbf{AC}\text{-}\mathbf{NF}$ if:

\begin{itemize}

\item
Every action frame in $\mathbf{AF}$ is $\PAC$-represented by some actual
neighborhood frame in $\mathbf{AC}\text{-}\mathbf{NF}$;

\item
Every actual neighborhood frame in $\mathbf{AC}\text{-}\mathbf{NF}$
$\PAC$-represents some action frame in $\mathbf{AF}$.

\end{itemize}

\end{definition}

Suppose that a class of action frames is $\PAC$-represented by a class of
actual neighborhood frames. Intuitively, this means that, as far as actual
powers are concerned, one can translate the information carried by action
frames into the corresponding neighborhood frames.

Now fix a class of action frames $\mathbf{AF}$ and a collection of properties
of actual neighborhood frames. To show that $\mathbf{AF}$ is
$\PAC$-represented by the class of actual neighborhood frames satisfying
these properties, it suffices to establish the following two statements:

\begin{itemize}

\item
\Fdefs{Havingness} (necessity): every actual neighborhood frame that
$\PAC$-represents an action frame in $\mathbf{AF}$ satisfies these properties;

\item
\Fdefs{Enoughness} (sufficiency): every actual neighborhood frame satisfying
the properties $\PAC$-represents an action frame in $\mathbf{AF}$.

\end{itemize}

Later we will follow exactly this pattern.

\putaway{
\paragraph{Remarks.}

Fix a class of action frames $\mathbf{AF}$. From each $\FAF \in \mathbf{AF}$, we obtain an alpha (resp.\ actual) neighborhood frame $\FALNF$ (resp.\ $\FACNF$) that $\PAL$-represents (resp.\ $\PAC$-represents) $\FAF$.
Let $\alpha\text{-}\mathbf{NF}$ (resp.\ $\mathbf{AC}\text{-}\mathbf{NF}$) be the class of all such alpha (resp.\ actual) neighborhood frames.
Then, trivially, $\mathbf{AF}$ is $\PAL$-represented (resp.\ $\PAC$-represented) by $\alpha\text{-}\mathbf{NF}$ (resp.\ $\mathbf{AC}\text{-}\mathbf{NF}$).
This observation is not very informative: to obtain a meaningful representation, the classes $\alpha\text{-}\mathbf{NF}$ and $\mathbf{AC}\text{-}\mathbf{NF}$ should be specified by some \emph{good} properties of alpha and actual neighborhood frames rather than by construction from $\mathbf{AF}$. Of course, \emph{good} is a vague notion. 
}

\section{General concurrent game frames and eight classes}
\label{section: Eight classes of general concurrent game frames}

In this section, we present the eight classes of general concurrent game frames considered by Li and Ju~\cite{li-minimal-2025, li-completeness-2026}. Later, assuming two agents, we establish their $\PAC$-representability.

\begin{definition}[General concurrent game frames]

An action frame
\[
\FGCGF = (\FST, \FAC, \{\Fav_\FCC \mid \FCC \subseteq \FAG\}, \{\Fout_\FCC \mid \FCC \subseteq \FAG\})
\]
is a \Fdefs{general concurrent game frame} if the following conditions hold:

\begin{itemize}

\item

The \Fdefs{grand-coalition-induced outcome condition} (simply the \Fdefs{GCI-condition}):
for every $\FCC \subseteq \FAG$, every $s \in \FST$, and every $\sigma_\FCC \in \FJA_\FCC$,
\[
\Fout_\FCC(s,\sigma_\FCC)
=
\bigcup
\{\, \Fout_\FAG(s,\sigma_\FAG) \mid \sigma_\FAG \in \FJA_\FAG \text{ and } \sigma_\FAG|_\FCC = \sigma_\FCC \,\}.
\]

\item

The \Fdefs{outcome-driven availability condition} (simply the \Fdefs{ODA-condition}):
for every $\FCC \subseteq \FAG$ and every $s \in \FST$,
\[
\Fav_\FCC(s)
=
\{\, \sigma_\FCC \in \FJA_\FCC \mid \Fout_\FCC(s,\sigma_\FCC) \neq \emptyset \,\}.
\]

\end{itemize}

\end{definition}

Thus, under the GCI-condition and the ODA-condition, the functions $\Fout_\FCC$ and $\Fav_\FCC$ (for all $\FCC \subseteq \FAG$) are determined by the grand-coalition outcome function $\Fout_\FAG$.

\begin{definition}[Seriality, independence, and determinism]
\label{definition:Seriality, independence, and determinism of general concurrent game frames}

Let $\FGCGF = (\FST, \FAC, \{\Fav_\FCC \mid \FCC \subseteq \FAG\}, \{\Fout_\FCC \mid \FCC \subseteq \FAG\})$ be a general concurrent game frame. We say that:

\begin{itemize}

\item
$\FGCGF$ is \Fdefs{serial} if, for all $\FCC \subseteq \FAG$ and all $s \in \FST$, we have $\Fav_\FCC(s) \neq \emptyset$;

\item
$\FGCGF$ is \Fdefs{independent} if, for all $s \in \FST$ and all disjoint coalitions $\FCC,\FDD \subseteq \FAG$,
whenever $\sigma_\FCC \in \Fav_\FCC(s)$ and $\sigma_\FDD \in \Fav_\FDD(s)$, we have $\sigma_\FCC \cup \sigma_\FDD \in \Fav_{\FCC \cup \FDD}(s)$;

\item
$\FGCGF$ is \Fdefs{deterministic} if, for all $s \in \FST$ and all $\sigma_\FAG \in \Fav_\FAG(s)$, the set $\Fout_\FAG(s,\sigma_\FAG)$ is a singleton.

\end{itemize}

\end{definition}

Let $\mathtt{S}$, $\mathtt{I}$, and $\mathtt{D}$ denote seriality, independence, and determinism, respectively. Let
\[
\FES = \{\epsilon,\ \mathtt{S},\ \mathtt{I},\ \mathtt{D},\ \mathtt{SI},\ \mathtt{SD},\ \mathtt{ID},\ \mathtt{SID}\}
\]
be the set of the eight strings encoding all combinations of these three properties. Here $\epsilon$ denotes the empty combination.
For $\FXX \in \FES$, we say that a general concurrent game frame is an \Fdefs{$\FXX$-frame} if it satisfies the properties indicated by $\FXX$.

The following facts will be used implicitly throughout the sequel.

\begin{fact}
\label{fact:gcgf-basic-consequences}

Let
\[
\FGCGF = (\FST, \FAC, \{\Fav_\FCC \mid \FCC \subseteq \FAG\}, \{\Fout_\FCC \mid \FCC \subseteq \FAG\})
\]
be a general concurrent game frame. Then the following hold:

\begin{enumerate}

\item
\Fdefs{Outcome monotonicity:} for all $s \in \FST$, all $\FCC \subseteq \FDD \subseteq \FAG$, all $\sigma_\FCC \in \FJA_\FCC$, and all $\sigma_\FDD \in \FJA_\FDD$ such that $\sigma_\FDD|_\FCC = \sigma_\FCC$, we have
\[
\Fout_\FDD(s,\sigma_\FDD)\subseteq \Fout_\FCC(s,\sigma_\FCC).
\]

\item

\Fdefs{Alternative GCI-condition:}
for all $s \in \FST$, all $\FCC \subseteq \FAG$, and all $\sigma_\FCC \in \FJA_\FCC$,
\[
\Fout_\FCC(s,\sigma_\FCC)
=
\bigcup
\{\, \Fout_\FAG(s,\sigma_\FAG)\mid \sigma_\FAG \in \Fav_\FAG(s)\ \text{and}\ \sigma_\FAG|_\FCC=\sigma_\FCC \,\}.
\]

\end{enumerate}

\end{fact}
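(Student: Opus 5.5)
Both items follow directly from the GCI-condition and the ODA-condition, so the plan is to unfold these definitions; no earlier result beyond them is needed.

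For \emph{outcome monotonicity}, fix $s$, coalitions $\FCC \subseteq \FDD$, and $\sigma_\FDD$ with $\sigma_\FDD|_\FCC = \sigma_\FCC$. By the GCI-condition, $\Fout_\FDD(s,\sigma_\FDD)$ is the union of the sets $\Fout_\FAG(s,\sigma_\FAG)$ over all $\sigma_\FAG \in \FJA_\FAG$ with $\sigma_\FAG|_\FDD = \sigma_\FDD$. Each such $\sigma_\FAG$ also satisfies $\sigma_\FAG|_\FCC = (\sigma_\FAG|_\FDD)|_\FCC = \sigma_\FDD|_\FCC = \sigma_\FCC$, because restriction composes. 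Hence the index set of the union defining $\Fout_\FDD(s,\sigma_\FDD)$ is contained in the index set of the union defining $\Fout_\FCC(s,\sigma_\FCC)$. A union over a smaller index set is a subset of the union over a larger one, which gives the inclusion.

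For the \emph{alternative GCI-condition}, start again from the GCI-condition for $\FCC$. Split the grand-coalition joint actions $\sigma_\FAG$ extending $\sigma_\FCC$ into two groups: those in $\Fav_\FAG(s)$ and those not in it. Apply the ODA-condition with coalition $\FAG$: every $\sigma_\FAG \notin \Fav_\FAG(s)$ has $\Fout_\FAG(s,\sigma_\FAG) = \emptyset$. The second group therefore contributes nothing to the union, and restricting the union to available joint actions leaves it unchanged. This is exactly the stated equality.

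Neither step poses a real obstacle. The only points needing care are the compatibility of restrictions, $(\sigma|_\FDD)|_\FCC = \sigma|_\FCC$, and applying ODA specifically at the grand coalition. The degenerate case $\FCC = \emptyset$ is covered uniformly, since every $\sigma_\FAG$ restricts to $\emptyset$.
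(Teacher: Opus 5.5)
Your proposal is correct and follows the same route as the paper. The paper simply notes that monotonicity is immediate from the GCI-condition and that the alternative form follows from the ODA-condition, because unavailable grand-coalition actions have empty outcome sets; you have spelled out exactly that argument, including the index-set inclusion via $(\sigma_\FAG|_\FDD)|_\FCC=\sigma_\FAG|_\FCC$.
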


The first item is an immediate consequence of the GCI-condition, and the second is a reformulation of the GCI-condition using the ODA-condition (since unavailable grand-coalition actions have empty outcome sets).

\putaway{
\paragraph{Remarks.}

The main question addressed below is whether any, or all, of the eight classes
of general concurrent game frames are $\PAC$-represented by classes of actual
neighborhood frames specified directly by conditions on actual neighborhood
functions.
}

\section{Representing actual powers}
\label{section: Representing actual powers}

In this section, we proceed as follows. First, we define eight classes of $\PAC$-representative actual neighborhood frames and establish several basic facts about them. Second, under the assumption that there are exactly two agents, we show that each of the eight classes of general concurrent game frames introduced in the previous section is $\PAC$-represented by the corresponding class of $\PAC$-representative actual neighborhood frames. The representability result has two directions: \emph{havingness} and \emph{enoughness}.

\subsection{Eight classes of $\PAC$-representative actual neighborhood frames}
\label{subsection: Eight classes of AC-representative actual neighborhood frames}

The following definitions are stated for an arbitrary finite set of agents $\FAG$. Later, we prove the representation theorem under the assumption that $|\FAG|=2$ for the enoughness direction. The havingness direction does not require the two-agent restriction.

\begin{definition}[$\PAC$-representative actual neighborhood frames]
\label{definition: AC-representative actual neighborhood frames with two agents}

Let $\FACNF = (\FST, \{\Facnei_\FCC \mid \FCC \subseteq \FAG\})$ be an actual neighborhood frame.
We say that $\FACNF$ is \Fdefs{$\PAC$-representative} if the following conditions are satisfied:

\begin{enumerate}

\item \Fdefs{Actual triviality of the empty coalition:}
for all $s \in \FST$, if $\Facnei_\emptyset(s)$ is nonempty, then it is a singleton.

\emph{Intuitively, this condition says that the empty coalition has at most one actual power. Having more than one actual power would amount to having more than one choice, hence the name.}

\item \Fdefs{Liveness:}
for all $s \in \FST$ and all $\FCC \subseteq \FAG$, we have $\emptyset \notin \Facnei_\FCC(s)$.

\emph{Intuitively, this condition rules out the degenerate actual power $\emptyset$ for any coalition. Note that liveness does not force coalitions to have powers; it only rules out treating the empty set as a ``power'' when powers exist.}

\item \Fdefs{Actual power inclusion:}
for all $s \in \FST$ and all $\FCC, \FDD \subseteq \FAG$ with $\FCC \subseteq \FDD$, for all $X \in \Facnei_\FDD(s)$ there exists $Y \in \Facnei_\FCC(s)$ such that $X \subseteq Y$.

\emph{Intuitively, every actual power of a larger coalition is included in some actual power of a smaller coalition.}

\item \Fdefs{Actual power decomposition:}
for all $s \in \FST$ and all $\FCC, \FDD \subseteq \FAG$ with $\FCC \subseteq \FDD$, for every $X \in \Facnei_\FCC(s)$ there exists $\Delta \subseteq \Facnei_\FDD(s)$ such that $X = \bigcup \Delta$.

\emph{Intuitively, every actual power of a smaller coalition can be decomposed into (possibly many) actual powers of a larger coalition.}

\end{enumerate}

\end{definition}

We use ``$\PAC$-representative'' for actual neighborhood frames satisfying the intrinsic conditions intended to characterize those that $\PAC$-represent general concurrent game frames. Note ``$\PAC$-representative'' and ``$\PAC$-representable'' are defined in different ways.

\begin{definition}[Seriality, independence, and determinism of $\PAC$-representative actual neighborhood fra\-mes]
\label{definition:Seriality, independence, and determinism of AC-representative actual neighborhood frames}

Let $\FACNF = (\FST, \{\Facnei_\FCC \mid \FCC \subseteq \FAG\})$ be an $\PAC$-representative actual neighborhood frame.
We say that:

\begin{itemize}

\item $\FACNF$ is \Fdefs{$\PAC$-serial} if for all $s \in \FST$ and all $\FCC \subseteq \FAG$, we have $\Facnei_\FCC(s) \neq \emptyset$.

\item $\FACNF$ is \Fdefs{$\PAC$-independent} if, for all $s \in \FST$ and all $\FCC, \FDD \subseteq \FAG$ with $\FCC \cap \FDD = \emptyset$, whenever $X \in \Facnei_\FCC(s)$ and $Y \in \Facnei_\FDD(s)$, there exists $Z \in \Facnei_{\FCC \cup \FDD}(s)$ such that $Z \subseteq X \cap Y$.

\emph{Intuitively, for disjoint coalitions, any two actual powers can be jointly refined to an actual power of their union that is compatible with both.}

\item $\FACNF$ is \Fdefs{$\PAC$-deterministic} if for all $s \in \FST$ and all $X \in \Facnei_\FAG(s)$, $X$ is a singleton.

\end{itemize}

\end{definition}

As before, we let the three symbols $\mathtt{S}$, $\mathtt{I}$, and $\mathtt{D}$ denote $\PAC$-seriality, $\PAC$-independence, and $\PAC$-determinism, respectively. We also reuse the set
\[
\FES = \{\epsilon,\ \mathtt{S},\ \mathtt{I},\ \mathtt{D},\ \mathtt{SI},\ \mathtt{SD},\ \mathtt{ID},\ \mathtt{SID}\},
\]
whose elements serve as labels for the eight possible combinations of these three properties.

For each $\FXX\in\FES$ and each $\PAC$-representative actual neighborhood
frame $\FACNF$, we say that $\FACNF$ is an \Fdefs{$\FXX$-frame} if it satisfies
the properties corresponding to $\FXX$.

\medskip

What follows are several basic facts about $\PAC$-representative actual neighborhood frames. 
These facts are useful both for intuition and for later technical developments.
Again, we refer to \cite{chen-representation-2026} for their proofs.

The first fact shows that the set of states reachable from a given state is independent of the coalition:
all coalitions share the same set of successors of that state.

\begin{fact}
\label{fact: enabling the same successors}

Let $\FACNF = (\FST, \{\Facnei_\FCC \mid \FCC \subseteq \FAG\})$ be an $\PAC$-representative actual neighborhood frame.
For all $s\in \FST$ and all $\FCC \subseteq \FAG$, we have
\[
\bigcup \Facnei_\FCC(s) = \bigcup \Facnei_\emptyset (s).
\]

\end{fact}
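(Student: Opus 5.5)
We prove the equality by establishing both inclusions, using only the actual power inclusion and actual power decomposition conditions of an $\PAC$-representative actual neighborhood frame, applied to the pair $\emptyset \subseteq \FCC$. Liveness and actual triviality of the empty coalition are not needed. Fix $s \in \FST$ and $\FCC \subseteq \FAG$.

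For the inclusion $\bigcup \Facnei_\FCC(s) \subseteq \bigcup \Facnei_\emptyset(s)$, let $t \in \bigcup \Facnei_\FCC(s)$. Pick $X \in \Facnei_\FCC(s)$ with $t \in X$. Since $\emptyset \subseteq \FCC$, actual power inclusion yields some $Y \in \Facnei_\emptyset(s)$ with $X \subseteq Y$. Hence $t \in Y \subseteq \bigcup \Facnei_\emptyset(s)$.

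For the converse inclusion, let $t \in \bigcup \Facnei_\emptyset(s)$ and pick $X \in \Facnei_\emptyset(s)$ with $t \in X$. Actual power decomposition, again with $\emptyset \subseteq \FCC$, yields some $\Delta \subseteq \Facnei_\FCC(s)$ with $X = \bigcup \Delta$. Thus $t$ lies in some member $Z \in \Delta$, and so $t \in \bigcup \Facnei_\FCC(s)$.

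There is no real obstacle. The only point to check is the degenerate case where $\Facnei_\emptyset(s)$ or $\Facnei_\FCC(s)$ is empty. Both arguments go element by element, so they handle this automatically: each side is empty exactly when the other is. Hence the fact holds without any appeal to seriality.
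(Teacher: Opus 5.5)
Your proof is correct and is essentially the intended argument; the paper itself defers the proof to the full version. Actual power inclusion at $\emptyset\subseteq\FCC$ gives $\bigcup\Facnei_\FCC(s)\subseteq\bigcup\Facnei_\emptyset(s)$, actual power decomposition at the same pair gives the reverse inclusion, and you are right that liveness and actual triviality are not needed, with the element-wise argument covering the empty cases automatically.
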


The next fact shows that emptiness of the set of actual powers at a state does not depend on the coalition:
either every coalition has actual power at $s$, or none does.

\begin{fact}
\label{fact:either all empty or none empty}

Let $\FACNF = (\FST, \{\Facnei_\FCC \mid \FCC \subseteq \FAG\})$ be an $\PAC$-representative actual neighborhood frame.
For all $s\in \FST$ and all $\FCC \subseteq \FAG$,
\[
\Facnei_\FCC(s)=\emptyset \quad \text{iff} \quad \Facnei_\emptyset(s)=\emptyset.
\]

\end{fact}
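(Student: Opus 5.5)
The plan is to show that nonemptiness propagates in both directions along the inclusion order of coalitions, using the empty coalition as a hub. Every coalition $\FCC$ satisfies $\emptyset \subseteq \FCC$, so actual power inclusion and actual power decomposition both apply to the pair $(\emptyset, \FCC)$. We fix $s \in \FST$ and $\FCC \subseteq \FAG$ and prove the two directions of the biconditional separately.

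For the direction from left to right, we argue by contraposition. Suppose $\Facnei_\emptyset(s) \neq \emptyset$ and pick some $Y \in \Facnei_\emptyset(s)$. By actual power decomposition there is $\Delta \subseteq \Facnei_\FCC(s)$ with $Y = \bigcup \Delta$. Liveness gives $\emptyset \notin \Facnei_\emptyset(s)$, so $Y \neq \emptyset$. Hence $\Delta$ is nonempty, since the union of the empty family is empty, and therefore $\Facnei_\FCC(s) \neq \emptyset$.

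For the direction from right to left, again by contraposition, suppose $\Facnei_\FCC(s) \neq \emptyset$ and pick some $X \in \Facnei_\FCC(s)$. Actual power inclusion, applied with the smaller coalition $\emptyset$ and the larger coalition $\FCC$, yields some $Y \in \Facnei_\emptyset(s)$ with $X \subseteq Y$. In particular $\Facnei_\emptyset(s) \neq \emptyset$.

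Alternatively, one could derive the result from Fact~\ref{fact: enabling the same successors}, noting that liveness makes $\Facnei_\FCC(s) = \emptyset$ equivalent to $\bigcup \Facnei_\FCC(s) = \emptyset$. The direct argument above avoids depending on that fact.

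There is no real obstacle here. The only subtle point is the first direction: decomposition alone could be witnessed by $\Delta = \emptyset$, and it is liveness that excludes this by forcing $Y \neq \emptyset$. Actual triviality of the empty coalition is not needed.
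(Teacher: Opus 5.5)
Your proof is correct: decomposition applied to the pair $(\emptyset,\FCC)$, together with liveness, gives the left-to-right direction, and inclusion gives the right-to-left direction. The paper defers this proof to the full version, but placing it right after Fact~\ref{fact: enabling the same successors} suggests the intended argument is the alternative you mention (equal unions plus liveness); your direct argument is that same route with the proof of Fact~\ref{fact: enabling the same successors} written out, using the same three conditions.
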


The following fact shows that if the empty coalition has any actual power at a state $s$, then it has exactly one: namely, the overall successor set of~$s$.

\begin{fact}
\label{fact:empty-coalition-power}

Let $\FACNF = (\FST, \{\Facnei_\FCC \mid \FCC \subseteq \FAG\})$ be an $\PAC$-representative actual neighborhood frame.
For all $s\in \FST$, if $\Facnei_\emptyset (s)\neq \emptyset$, then
\[
\Facnei_\emptyset (s) \;=\; \bigl\{\, \bigcup \Facnei_\FAG (s) \,\bigr\}.
\]

\end{fact}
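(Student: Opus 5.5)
We prove the last fact (\Cref{fact:empty-coalition-power}). Fix $s\in\FST$ with $\Facnei_\emptyset(s)\neq\emptyset$. By actual triviality of the empty coalition, $\Facnei_\emptyset(s)$ is a singleton, say $\Facnei_\emptyset(s)=\{Y_0\}$. It therefore suffices to show that
\[
Y_0=\bigcup\Facnei_\FAG(s).
\]
Once this is done, the displayed equation in the fact follows immediately.

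The shortest route uses \Cref{fact: enabling the same successors} with $\FCC=\FAG$:
\[
\bigcup\Facnei_\FAG(s)=\bigcup\Facnei_\emptyset(s)=\bigcup\{Y_0\}=Y_0,
\]
which is exactly what we need.

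To keep the argument self-contained, we also argue directly from the two structural conditions, applying each with $\FCC=\emptyset\subseteq\FDD=\FAG$.

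For the inclusion $\bigcup\Facnei_\FAG(s)\subseteq Y_0$, take any $X\in\Facnei_\FAG(s)$. Actual power inclusion gives some $Y\in\Facnei_\emptyset(s)$ with $X\subseteq Y$. The only such $Y$ is $Y_0$, so $X\subseteq Y_0$.

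For the converse inclusion $Y_0\subseteq\bigcup\Facnei_\FAG(s)$, apply actual power decomposition to $Y_0\in\Facnei_\emptyset(s)$. This yields some $\Delta\subseteq\Facnei_\FAG(s)$ with $Y_0=\bigcup\Delta$, and hence $Y_0\subseteq\bigcup\Facnei_\FAG(s)$.

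There is no real obstacle here. The only point to watch is that singletonness of $\Facnei_\emptyset(s)$ is what makes the witness $Y$ from actual power inclusion coincide with $Y_0$. Liveness is not needed for this fact.
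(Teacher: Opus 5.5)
Your proof is correct and is essentially the intended argument (the paper defers this proof to the full version): actual triviality of the empty coalition reduces the claim to $Y_0=\bigcup\Facnei_\FAG(s)$, which is immediate from Fact~\ref{fact: enabling the same successors} with $\FCC=\FAG$, and your self-contained argument via actual power inclusion and actual power decomposition for $\emptyset\subseteq\FAG$ is exactly what underlies that fact. Your observation that liveness plays no role here is also accurate.
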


The final fact states that, when moving from a smaller coalition to a larger one, every actual power $X$ of the
smaller coalition can be recovered as the union of those actual powers of the larger coalition that are included in $X$.

\begin{fact}
\label{fact:X union Delta X}

Let $\FACNF = (\FST, \{\Facnei_\FCC \mid \FCC \subseteq \FAG\})$ be an $\PAC$-representative actual neighborhood frame.
For every $s \in \FST$, all $\FCC, \FDD \subseteq \FAG$ with $\FCC \subseteq \FDD$, and every $X \in \Facnei_\FCC(s)$,
\[
\bigcup \{Z \in \Facnei_\FDD (s) \mid Z \subseteq X\} = X.
\]

\end{fact}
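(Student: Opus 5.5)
The plan is to obtain the inclusion ``$\supseteq$'' from actual power decomposition and the inclusion ``$\subseteq$'' by a direct check.

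Fix $s \in \FST$, coalitions $\FCC \subseteq \FDD \subseteq \FAG$, and $X \in \Facnei_\FCC(s)$, and abbreviate
\[
\Gamma_X = \{Z \in \Facnei_\FDD(s) \mid Z \subseteq X\}.
\]
Neither direction will use triviality of the empty coalition or liveness.

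For the inclusion $\bigcup \Gamma_X \subseteq X$, there is nothing to do: every member of $\Gamma_X$ is by definition a subset of $X$, so their union is contained in $X$. This holds even when $\Gamma_X$ is empty, since then the union is $\emptyset$.

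For the inclusion $X \subseteq \bigcup \Gamma_X$, I would invoke actual power decomposition for the pair $\FCC \subseteq \FDD$. It yields some $\Delta \subseteq \Facnei_\FDD(s)$ with $X = \bigcup \Delta$. Every $Z \in \Delta$ is then contained in $\bigcup \Delta = X$, and it also lies in $\Facnei_\FDD(s)$, so $Z \in \Gamma_X$. Hence $\Delta \subseteq \Gamma_X$, and
\[
X = \bigcup \Delta \subseteq \bigcup \Gamma_X .
\]
Combining the two inclusions gives the stated equality.

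I do not expect any real obstacle. The only point needing care is to note that the decomposing family $\Delta$ given by the axiom automatically consists of subsets of $X$, so it sits inside the canonical family $\Gamma_X$. The fact then states that the largest such family, namely all $\FDD$-powers contained in $X$, also recovers $X$ exactly.
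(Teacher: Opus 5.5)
Your proof is correct: the inclusion $\bigcup\{Z \mid Z \subseteq X\} \subseteq X$ is immediate, and the reverse inclusion follows because the family $\Delta$ supplied by actual power decomposition consists of sets contained in $X$, so it lies inside the canonical family. The paper omits this proof and defers to the full version, but your argument is the direct derivation from actual power decomposition that the fact calls for.
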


\subsection{Representation theorems for actual powers}
\label{subsection: Actual havingness theorems}

\begin{theorem}[Actual havingness theorem]

Let $\FXX \in \FES$.
For every actual neighborhood frame, if it $\PAC$-represents a general concurrent game $\FXX$-frame, then it is an $\PAC$-representative actual neighborhood $\FXX$-frame.

\end{theorem}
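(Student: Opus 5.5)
We fix a general concurrent game $\FXX$-frame $\FGCGF$ and an actual neighborhood frame $\FACNF$ that $\PAC$-represents it. By definition, $\Facnei_\FCC(s)=\FACEF_\FCC(s)=\{\Fout_\FCC(s,\sigma_\FCC)\mid \sigma_\FCC\in\Fav_\FCC(s)\}$ for every $\FCC$ and $s$. The plan is to verify the four conditions of $\PAC$-representativeness one by one, using only the GCI- and ODA-conditions and Fact~\ref{fact:gcgf-basic-consequences}. We then check each of $\mathtt{S}$, $\mathtt{I}$, $\mathtt{D}$ separately, since an $\FXX$-frame simply satisfies the listed subset of properties. No restriction on $|\FAG|$ is needed.

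\emph{Actual triviality:} $\FJA_\emptyset=\{\emptyset\}$, so $\Fav_\emptyset(s)\subseteq\{\emptyset\}$. Hence $\Facnei_\emptyset(s)$ has at most one element.

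\emph{Liveness:} by the ODA-condition, every available $\sigma_\FCC$ has $\Fout_\FCC(s,\sigma_\FCC)\neq\emptyset$. Hence $\emptyset\notin\Facnei_\FCC(s)$.

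\emph{Actual power inclusion:} given $\FCC\subseteq\FDD$ and $X=\Fout_\FDD(s,\sigma_\FDD)$ with $\sigma_\FDD$ available, set $\sigma_\FCC=\sigma_\FDD|_\FCC$. Outcome monotonicity gives $X\subseteq\Fout_\FCC(s,\sigma_\FCC)$. This set contains the nonempty $X$, so it is nonempty, and by ODA $\sigma_\FCC\in\Fav_\FCC(s)$. Thus $Y=\Fout_\FCC(s,\sigma_\FCC)$ works.

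\emph{Actual power decomposition:} this is the only step needing a small computation. Given $X=\Fout_\FCC(s,\sigma_\FCC)$, let $\Delta=\{\Fout_\FDD(s,\sigma_\FDD)\mid \sigma_\FDD\in\Fav_\FDD(s),\ \sigma_\FDD|_\FCC=\sigma_\FCC\}\subseteq\Facnei_\FDD(s)$. By the alternative GCI-condition, $X$ is the union of the sets $\Fout_\FAG(s,\sigma_\FAG)$ with $\sigma_\FAG$ available and $\sigma_\FAG|_\FCC=\sigma_\FCC$. Each such $\sigma_\FAG$ restricts to $\sigma_\FDD=\sigma_\FAG|_\FDD$, which extends $\sigma_\FCC$. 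This $\sigma_\FDD$ is available by monotonicity and ODA, and it satisfies $\Fout_\FAG(s,\sigma_\FAG)\subseteq\Fout_\FDD(s,\sigma_\FDD)$. Hence $X\subseteq\bigcup\Delta$. Conversely, monotonicity gives each member of $\Delta$ inside $X$. So $X=\bigcup\Delta$.

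For the optional properties, we argue as follows.

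\emph{Seriality} gives $\Fav_\FCC(s)\neq\emptyset$ and hence $\Facnei_\FCC(s)\neq\emptyset$.

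\emph{Independence:} for disjoint $\FCC,\FDD$ with $X=\Fout_\FCC(s,\sigma_\FCC)$ and $Y=\Fout_\FDD(s,\sigma_\FDD)$, independence gives $\sigma_\FCC\cup\sigma_\FDD\in\Fav_{\FCC\cup\FDD}(s)$. Applying monotonicity twice, $Z=\Fout_{\FCC\cup\FDD}(s,\sigma_\FCC\cup\sigma_\FDD)$ is contained in $X\cap Y$, and $Z\in\Facnei_{\FCC\cup\FDD}(s)$.

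\emph{Determinism} directly makes every member of $\Facnei_\FAG(s)$ a singleton.

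The main (mild) obstacle is the decomposition step. There one must make sure the chosen $\sigma_\FDD$ are actually available, which the ODA-condition settles once nonemptiness is inherited from the grand-coalition outcomes.
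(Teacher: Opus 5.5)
Your proof is correct and takes essentially the same route as the paper: you verify each of the four representativeness conditions and each of $\mathtt{S}$, $\mathtt{I}$, $\mathtt{D}$ directly from the GCI- and ODA-conditions and outcome monotonicity. Your deviations are minor streamlinings. You get actual triviality at once from $\FJA_\emptyset=\{\emptyset\}$. You justify availability of restricted actions via ODA, which the paper asserts without comment. And you index $\Delta$ by the available $\FDD$-extensions of $\sigma_\FCC$, so that $\bigcup\Delta\subseteq X$ follows from monotonicity rather than from a second appeal to the GCI-condition.
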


\begin{proof}

Let $\FACNF = (\FST, \{\Facnei_\FCC \mid \FCC \subseteq \FAG\})$ be an actual neighborhood frame.
Suppose that $\FACNF$ $\PAC$-represents a general concurrent game $\FXX$-frame
\[
\FGCGF= (\FST, \FAC, \{\Fav_\FCC \mid \FCC \subseteq \FAG\}, \{\Fout_\FCC \mid \FCC \subseteq \FAG\}).
\]
Then for every coalition $\FCC \subseteq \FAG$ and every state $s\in \FST$, we have $\Facnei_\FCC(s)=\FACEF_\FCC(s)$.

\paragraph{We show that $\FACNF$ is an $\FXX$-frame.}

We verify that each of the properties $\mathtt{S}$, $\mathtt{I}$, and $\mathtt{D}$ is preserved under $\PAC$-representation. It then follows, since $\FGCGF$ is an $\FXX$-frame, that $\FACNF$ is an $\FXX$-frame.

\begin{itemize}

\item \textbf{Seriality.}
Assume that $\FGCGF$ is serial. Fix $s \in \FST$ and $\FCC \subseteq \FAG$.
Then $\Fav_\FCC(s)\neq\emptyset$, and hence
\[
\FACEF_\FCC(s)
=\{\Fout_\FCC(s,\sigma_\FCC)\mid \sigma_\FCC\in\Fav_\FCC(s)\}\neq\emptyset.
\]
Therefore $\Facnei_\FCC(s)\neq\emptyset$, and thus $\FACNF$ is $\PAC$-serial.

\item \textbf{Independence.}
Assume that $\FGCGF$ is independent. Fix $s \in \FST$ and disjoint coalitions $\FCC,\FDD\subseteq \FAG$ with $\FCC\cap\FDD=\emptyset$.
Let $X\in\Facnei_\FCC(s)$ and $Y\in\Facnei_\FDD(s)$.
Since $\Facnei_\FCC=\FACEF_\FCC$ and $\Facnei_\FDD=\FACEF_\FDD$, there exist $\sigma_\FCC\in\Fav_\FCC(s)$ and $\sigma_\FDD\in\Fav_\FDD(s)$ such that $\Fout_\FCC(s,\sigma_\FCC)=X$ and $\Fout_\FDD(s,\sigma_\FDD)=Y$.

By independence of $\FGCGF$, we have $\sigma_\FCC\cup\sigma_\FDD \in \Fav_{\FCC\cup\FDD}(s)$.
Let $Z=\Fout_{\FCC\cup\FDD}(s,\sigma_\FCC\cup\sigma_\FDD)$.
Then $Z\in\FACEF_{\FCC\cup\FDD}(s)= \Facnei_{\FCC\cup\FDD}(s)$.
Moreover, since $\sigma_\FCC\subseteq \sigma_\FCC\cup\sigma_\FDD$ and $\sigma_\FDD\subseteq \sigma_\FCC\cup\sigma_\FDD$, by outcome monotonicity, we have
$Z\subseteq X$ and $Z\subseteq Y$, and hence $Z\subseteq X\cap Y$.
Therefore $\FACNF$ is $\PAC$-independent.

\item \textbf{Determinism.}
Assume $\FGCGF$ is deterministic. Fix $s\in\FST$ and $X\in\Facnei_\FAG(s)$.
Then $X\in\FACEF_\FAG(s)$, so there exists $\sigma_\FAG\in\Fav_\FAG(s)$ such that $\Fout_\FAG(s,\sigma_\FAG)=X$.
By determinism of $\FGCGF$, $\Fout_\FAG(s,\sigma_\FAG)$ is a singleton; hence so is $X$.
Therefore $\FACNF$ is $\PAC$-deterministic.

\end{itemize}

\paragraph{We show that $\FACNF$ satisfies the four $\PAC$-representativeness conditions.}

\begin{enumerate}
\item \textbf{Actual triviality of the empty coalition.}
Fix $s\in\FST$.
Note that
\[
\Fout_\emptyset(s,\emptyset)= \bigcup \{\Fout_\FAG (s,\sigma_\FAG ) \mid  \sigma_\FAG  \in \FJA_\FAG \}.
\]
If $\Fout_\emptyset(s,\emptyset)\neq\emptyset$, then by definition of $\FACEF_\emptyset$ we have
\[
\Facnei_\emptyset(s)=\FACEF_\emptyset(s)=\{\Fout_\emptyset(s,\emptyset)\},
\]
so $\Facnei_\emptyset(s)$ is a singleton. If $\Fout_\emptyset(s,\emptyset)=\emptyset$, then $\Fav_\emptyset(s)=\emptyset$ and hence
$\Facnei_\emptyset(s)=\FACEF_\emptyset(s)=\emptyset$.

\item \textbf{Liveness.}
Fix $s\in\FST$ and $\FCC\subseteq\FAG$.
Since
\[
\FACEF_\FCC(s)=\{\Fout_\FCC(s,\sigma_\FCC)\mid \sigma_\FCC\in\Fav_\FCC(s)\}
\]
and each $\sigma_\FCC\in\Fav_\FCC(s)$ satisfies $\Fout_\FCC(s,\sigma_\FCC)\neq\emptyset$, we obtain
$\emptyset\notin \FACEF_\FCC(s)$.
Thus $\emptyset\notin \Facnei_\FCC(s)$.

\item \textbf{Actual power inclusion.}
Fix $s\in\FST$ and coalitions $\FCC\subseteq\FDD\subseteq\FAG$.
Let $X\in\Facnei_\FDD(s)=\FACEF_\FDD(s)$.
Then there exists $\sigma_\FDD\in\Fav_\FDD(s)$ such that $\Fout_\FDD(s,\sigma_\FDD)=X$.
Write $\sigma_\FDD=\sigma_\FCC\cup\sigma_{\FDD\setminus\FCC}$ with $\sigma_\FCC\in \FJA_\FCC$ and $\sigma_{\FDD\setminus\FCC}\in \FJA_{\FDD\setminus\FCC}$.
Since $\sigma_\FDD$ is available for $\FDD$, its restriction $\sigma_\FCC$ is available for $\FCC$, i.e., $\sigma_\FCC\in\Fav_\FCC(s)$.
Moreover,
$
\Fout_\FDD(s,\sigma_\FDD)\subseteq \Fout_\FCC(s,\sigma_\FCC).
$
Let $Y=\Fout_\FCC(s,\sigma_\FCC)$. Then $Y\in\FACEF_\FCC(s)=\Facnei_\FCC(s)$ and $X\subseteq Y$, as required.

\item \textbf{Actual power decomposition.}
Fix $s\in\FST$ and coalitions $\FCC\subseteq\FDD\subseteq\FAG$.
Let $X\in\Facnei_\FCC(s)=\FACEF_\FCC(s)$.
Then there exists $\sigma_\FCC\in\Fav_\FCC(s)$ such that $\Fout_\FCC(s,\sigma_\FCC)=X$.
Let
$
\Theta=\{\sigma_\FAG\in\Fav_\FAG(s)\mid \sigma_\FCC\subseteq\sigma_\FAG\}.
$
By the GCI-condition of $\FGCGF$ and Fact \ref{fact:gcgf-basic-consequences},
\[
X=\Fout_\FCC(s,\sigma_\FCC)=\bigcup\{\Fout_\FAG(s,\sigma_\FAG)\mid \sigma_\FAG\in\Theta\}.
\]
Consider the family
\[
\Delta=\{\Fout_\FDD(s,\sigma_\FAG|_\FDD)\mid \sigma_\FAG\in\Theta\}.
\]
For each $\sigma_\FAG\in\Theta$, we have $\Fout_\FAG(s,\sigma_\FAG)\neq\emptyset$ and $\Fout_\FAG(s,\sigma_\FAG)\subseteq \Fout_\FDD(s,\sigma_\FAG|_\FDD)$, hence $\sigma_\FAG|_\FDD\in\Fav_\FDD(s)$ and therefore $\Fout_\FDD(s,\sigma_\FAG|_\FDD)\in\FACEF_\FDD(s)=\Facnei_\FDD(s)$.
Thus $\Delta\subseteq \Facnei_\FDD(s)$.

It remains to show that $X=\bigcup \Delta$.

\smallskip\noindent\textit{($\subseteq$)} Let $u\in X$. Then $u\in\Fout_\FAG(s,\sigma_\FAG)$ for some $\sigma_\FAG\in\Theta$, hence
$u\in \Fout_\FDD(s,\sigma_\FAG|_\FDD)$ and thus $u\in\bigcup\Delta$.

\smallskip\noindent\textit{($\supseteq$)} Let $u\in\bigcup\Delta$. Then $u\in\Fout_\FDD(s,\sigma_\FAG|_\FDD)$ for some $\sigma_\FAG\in\Theta$.
By the GCI-condition of $\FGCGF$, there exists $\lambda_\FAG \in \FJA_\FAG$ such that
$\sigma_\FAG|_\FDD\subseteq \lambda_\FAG$ and $u\in\Fout_\FAG(s,\lambda_\FAG)$.
Since $\sigma_\FCC\subseteq \sigma_\FAG|_\FDD\subseteq \lambda_\FAG$ and $\lambda_\FAG$ is available, we have $\lambda_\FAG\in\Theta$.
Hence
\[
u\in\bigcup\{\Fout_\FAG(s,\tau_\FAG)\mid \tau_\FAG\in\Theta\}= X.
\]

\end{enumerate}

This completes the proof.

\end{proof}

\begin{theorem}[Actual enoughness theorem]
\label{thm:actual-enoughness}

Assume $\FAG=\{a,b\}$.
Let $\FXX\in\FES$.
Every $\PAC$-representative actual neighborhood $\FXX$-frame $\PAC$-represents some general concurrent game $\FXX$-frame.

\end{theorem}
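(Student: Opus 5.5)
The plan is to build, for a given $\PAC$-representative actual neighborhood $\FXX$-frame $\FACNF=(\FST,\{\Facnei_\FCC\mid\FCC\subseteq\FAG\})$ with $\FAG=\{a,b\}$, a general concurrent game frame on the same state set. Only $\Fout_\FAG$ needs to be specified; $\Fout_\FCC$ and $\Fav_\FCC$ for the smaller coalitions are then forced by the GCI- and ODA-conditions. I would then verify that $\FACEF_\FCC=\Facnei_\FCC$ for each of the four coalitions $\emptyset,a,b,\FAG$, and that $\mathtt{S}$, $\mathtt{I}$, $\mathtt{D}$ transfer.

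\textbf{Construction.} Take the state-independent action set $\FAC=\mathcal{P}(\FST)\times\mathcal{P}(\FST)\times\{0,1\}$. An action $(X,Z,k)$ of agent $a$ is read as ``claim the $a$-power $X$, propose the refinement $Z$, bit $k$''. Actions of $b$, written $(Y,Z',k')$, are read in the same way. At state $s$, call the pair $((X,Z,k),(Y,Z',k'))$ \emph{legal} if
\[
X\in\Facnei_a(s),\quad Y\in\Facnei_b(s),\quad Z,Z'\in\Facnei_\FAG(s),\quad Z\subseteq X,\quad Z'\subseteq Y .
\]
Non-legal pairs get outcome $\emptyset$. For a legal pair, the candidate outcome is $Z$ if $k=k'$ and $Z'$ if $k\neq k'$; the bit lets either agent's proposal be selected. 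If the candidate lies in $X\cap Y$, it is the outcome. Otherwise the outcome is $\emptyset$ in the non-independent case. In the $\mathtt{I}$ case it is instead a fixed chosen $W(X,Y)\in\Facnei_\FAG(s)$ with $W(X,Y)\subseteq X\cap Y$, which exists by $\PAC$-independence.

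\textbf{Verification.} By construction every nonempty grand-coalition outcome lies in $\Facnei_\FAG(s)$. Conversely, each $Z\in\Facnei_\FAG(s)$ is attained: by actual power inclusion pick $X\in\Facnei_a(s)$ and $Y\in\Facnei_b(s)$ with $Z\subseteq X$ and $Z\subseteq Y$, and play $((X,Z,0),(Y,Z,0))$. Liveness makes the available grand actions exactly those with nonempty outcome, so $\FACEF_\FAG(s)=\Facnei_\FAG(s)$.

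For agent $a$, take an action $(X,Z,k)$ with $X\in\Facnei_a(s)$ and $Z\in\Facnei_\FAG(s)$, $Z\subseteq X$; other $a$-actions have empty outcome and are unavailable. All its grand outcomes are subsets of $X$, so $\Fout_a(s,(X,Z,k))\subseteq X$. For the reverse inclusion, take any $Z''\in\Facnei_\FAG(s)$ with $Z''\subseteq X$. Choose $Y\in\Facnei_b(s)$ with $Z''\subseteq Y$ by inclusion; then $b$'s action $(Y,Z'',1-k)$ produces $Z''$. Hence the union is $\bigcup\{Z''\in\Facnei_\FAG(s)\mid Z''\subseteq X\}=X$ by Fact~\ref{fact:X union Delta X}. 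So every available $a$-action has outcome in $\Facnei_a(s)$. Every $X\in\Facnei_a(s)$ is obtained this way: by actual power decomposition, or by Fact~\ref{fact:X union Delta X} together with liveness, some $Z\subseteq X$ exists in $\Facnei_\FAG(s)$. Agent $b$ is handled symmetrically, using the bit.

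For $\emptyset$: $\Fout_\emptyset(s,\emptyset)=\bigcup\Facnei_\FAG(s)$. If this is nonempty, Facts~\ref{fact:either all empty or none empty} and~\ref{fact:empty-coalition-power} give $\Facnei_\emptyset(s)=\{\bigcup\Facnei_\FAG(s)\}$. If it is empty, all neighborhoods at $s$ are empty and nothing is available, which again matches.

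\textbf{Transfer of the properties.}
\begin{itemize}
\item Determinism: grand outcomes are members of $\Facnei_\FAG(s)$, hence singletons under $\PAC$-determinism.
\item Seriality: each $\Facnei_\FCC(s)$ is nonempty, so the above yields available actions for every coalition (for $\emptyset$, $\bigcup\Facnei_\FAG(s)\neq\emptyset$).
\item Independence: any available $a$-action and available $b$-action form a legal pair. Thanks to the fallback $W(X,Y)$, the outcome of such a pair is nonempty, so the combined action is available. The fallback outcomes are still in $\Facnei_\FAG(s)$ and inside $X\cap Y$, so the union computations above are unaffected. Independence involving $\emptyset$ is trivial.
\end{itemize}

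The main obstacle is the single-agent step: making $\Fout_a$ equal exactly $X$ while $\Fout_b$ simultaneously equals exactly $Y$. A naive design, where one agent fixes the refinement, makes that agent's outcome collapse to a single $Z$. The bit-matching trick, which lets either agent's proposal be selected, is what I expect to need the most care, together with checking that the independence fallback does not enlarge the unions.
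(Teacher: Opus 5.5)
Your proposal is correct, and it breaks the circularity at the heart of the construction by a different device from the paper's. The paper builds each local game incrementally. It introduces three copies $\alpha_{n-X-Z}$, $\beta_{n-Y-Z}$ ($n\in\{1,2,3\}$) of every name and enforces the GCI-condition one level at a time: level-$n$ names of one agent are paired with level-$(n+1)$ names of the other, and the index of the latter fixes the outcome. Every such pairing is safe, so GCI-conditions already established are preserved. The cyclic order $1\to2\to3\to1$ guarantees that no pair of names is ever required to realize two different refinements. A separate Step~3 then pairs all remaining compatible names, and this is what yields independence.

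You replace this with a single closed-form $\Fout_\FAG$. Your parity bit does the work of the levels: equal bits select $a$'s refinement and unequal bits select $b$'s. Hence, against any fixed action of one agent, the other can always choose its bit so that its own refinement is realized. Your filter ``candidate $\subseteq X\cap Y$'' does the work of the paper's safety requirement, and the fallback $W(X,Y)$ does the work of Step~3.

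What your version buys:
\begin{itemize}
\item Because the outcome is a function of the pair, clashes cannot arise.
\item The GCI-verification for $\Fout_a(s,(X,Z,k))$ and $\Fout_b(s,(Y,Z',k'))$ becomes a direct two-inclusion computation using actual power inclusion and Fact~\ref{fact:X union Delta X}, with no ordering or preservation argument.
\item The single state-independent action set removes the need for state-tagged local games glued together, and it handles the trivial case uniformly.
\end{itemize}

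The paper's stepwise presentation makes more visible which joint actions are needed and why. The underlying requirement is the same in both: for each action of one agent with power $X$, every $Z''\in\Facnei_\FAG(s)$ with $Z''\subseteq X$ must be realized by some response of the other agent whose own power also contains $Z''$. Your matching-pennies device is the more economical way to meet it.
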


The proof is somewhat involved; we refer to \cite{chen-representation-2026} for an illustrative example.

\begin{proof}
~
\paragraph{Analysis of the task.}

Let $\FACNF = (\FST, \{\Facnei_\FCC \mid \FCC \subseteq \FAG\})$ be an $\PAC$-representative actual neighborhood $\FXX$-frame.
We aim to construct a general concurrent game $\FXX$-frame
\[
\FGCGF = (\FST, \FAC, \{\Fav_\FCC \mid \FCC \subseteq \FAG\}, \{\Fout_\FCC \mid \FCC \subseteq \FAG\})
\]
such that $\FGCGF$ is $\PAC$-represented by $\FACNF$.

All conditions defining general concurrent game $\FXX$-frames are
statewise: for each source state $s$, they involve only the availability
sets $\Fav_\FCC(s)$ and the outcome maps $\Fout_\FCC(s,\cdot)$, and do
not relate these information to those at a distinct source state $t$.
Consequently, the required availability and outcome information can be
constructed independently at each state. The resulting local
constructions can then be combined by using pairwise disjoint,
state-tagged action sets and taking their union as the global action
set.

Therefore, it suffices to proceed as follows: for each $s \in \FST$, starting from the local neighborhood structure
\[
\Fln \;=\; \{\Facnei_\FCC(s) \mid \FCC \subseteq \FAG\},
\]
we construct a local structure
\[
\Flg \;=\; \bigl(\FAC_s,\{\Fav_\FCC(s)\mid \FCC\subseteq\FAG\},\{\Fout_\FCC(s,\cdot)\mid \FCC\subseteq\FAG\}\bigr)
\]
satisfying the following conditions.

\begin{enumerate}
\item \label{itm:local-component}
$\Flg$ is a local component (called a \Fdefs{local game}) of a general concurrent game frame:

\begin{enumerate}
\item $\FAC_s$ is a set of actions.
\item For every $\FCC \subseteq \FAG$, $\Fav_\FCC(s) \subseteq \FJA_\FCC$.
\item For every $\FCC \subseteq \FAG$ and every $\sigma_\FCC \in \FJA_\FCC$, $\Fout_\FCC(s,\sigma_\FCC) \subseteq \FST$.
\item For every $\FCC \subseteq \FAG$:
\[
\Fav_\FCC(s)=\{\sigma_\FCC\in\FJA_\FCC \mid \Fout_\FCC(s,\sigma_\FCC)\neq\emptyset\}.
\]
\item For every $\FCC \subseteq \FAG$ and every $\sigma_\FCC \in \FJA_\FCC$, the action $\sigma_\FCC$ satisfies the GCI-condition at $s$:
\[
\Fout_\FCC(s,\sigma_\FCC)
=\bigcup \{\Fout_\FAG(s,\sigma_\FAG)\mid \sigma_\FAG\in\Fav_\FAG(s) \text{ and } \sigma_\FCC \subseteq \sigma_\FAG\}.
\]

\end{enumerate}

\item \label{itm:representable-locally}
$\Flg$ is $\PAC$-represented by $\Fln$, i.e., for every $\FCC \subseteq \FAG$,
\[
\{\Fout_\FCC(s,\sigma_\FCC)\mid \sigma_\FCC\in\Fav_\FCC(s)\} \;=\; \Facnei_\FCC(s).
\]

\item \label{itm:XX-properties}
$\Flg$ has the properties corresponding to $\FXX$.

\end{enumerate}

\smallskip
Fix $s \in \FST$. In the sequel, to simplify notation, we drop the explicit mention of $s$. More precisely:

\begin{itemize}
  \item We write $\Facnei_\FCC$ for $\Facnei_\FCC(s)$.
  \item We write $\FAC$ for $\FAC_s$.
  \item We write $\Fav_\FCC$ for $\Fav_\FCC(s)$.
  \item We write $\Fout_\FCC(\sigma_\FCC)$ for $\Fout_\FCC(s,\sigma_\FCC)$.
\end{itemize}

\paragraph{The trivial case.}

Assume $\Facnei_\emptyset=\emptyset$. By Fact~\ref{fact:either all empty or none empty}, we have $\Facnei_\FCC=\emptyset$ for every $\FCC \subseteq \FAG$.
Define a game $\Flg=(\FAC,\{\Fav_\FCC\mid \FCC\subseteq\FAG\},\{\Fout_\FCC\mid \FCC\subseteq\FAG\})$ as follows:

\begin{itemize}
  \item Let $\FAC$ be any nonempty set (e.g., a singleton).
  \item For every $\FCC \subseteq \FAG$, let $\Fav_\FCC=\emptyset$.
  \item For every $\FCC \subseteq \FAG$ and every $\sigma_\FCC\in\FJA_\FCC$, let $\Fout_\FCC(\sigma_\FCC)=\emptyset$.
\end{itemize}

It is straightforward to verify that $\Flg$ satisfies the requirements in the analysis above.

Assume now that $\Facnei_\emptyset\neq\emptyset$.

\paragraph{Construction of a local game.}

By Fact~\ref{fact:either all empty or none empty}, none of $\Facnei_a$, $\Facnei_b$, and $\Facnei_\FAG$ is empty.
Assume
\[
\Facnei_a=\{X_i\mid i\in I\},\qquad
\Facnei_b=\{Y_j\mid j\in J\},\qquad
\Facnei_\FAG=\{Z_k\mid k\in K\},
\]
where $I,J,K$ are index sets.

Initialize $\Flg=(\FAC,\{\Fav_\FCC\mid \FCC\subseteq\FAG\},\{\Fout_\FCC\mid \FCC\subseteq\FAG\})$ by setting $\FAC=\emptyset$ and leaving all $\Fav_\FCC$ and $\Fout_\FCC$ empty. We will add actions, availability, and outcomes step by step until $\Flg$ satisfies conditions~\ref{itm:local-component}--\ref{itm:XX-properties}.

Before the construction, we fix some terminology and conventions, and also give some quick preview of what we are going to do next:

\begin{itemize}
\item By \textbf{individual powers} we mean powers of $a$ or of $b$.

\item Actions in $\FAC$ will have the form $\alpha_{n-X-Z}$ and $\beta_{n-Y-Z}$, where
$n\in\{1,2,3\}$, $X\in\Facnei_a$, $Y\in\Facnei_b$, and $Z\in\Facnei_\FAG$ satisfy $Z\subseteq X$ and $Z\subseteq Y$.
We regard the index $n$ as indicating the \emph{level} (three groups). Joint actions of $\FAG$ will have the form $(\alpha_{n-X-Z},\beta_{m-Y-Z'})$

All $\alpha_{n-X-Z}$ will be available actions of $a$, and all $\beta_{n-Y-Z}$ will be available actions of $b$.
We will stipulate
\[
\Fout_a(\alpha_{n-X-Z})=X
\quad\text{and}\quad
\Fout_b(\beta_{n-Y-Z})=Y.
\]
We say that $\alpha_{n-X-Z}$ is a \Fdefs{name} of $X$, and that $\beta_{n-Y-Z}$ is a name of $Y$.

Available joint actions of $\FAG$ will be introduced gradually.
Not every such pair must be available, since independence need not hold.
Whenever we declare $(\alpha_{n-X-Z}, \ab \beta_{m-Y-Z'})$ to be available, we will simultaneously specify its outcome set
$\Fout_\FAG(\alpha_{n-X-Z},\ab \beta_{m-Y-Z'})=Z''$.
In that case we say that the joint action is a name of $Z''$.

\item For any available action $\alpha$ of $a$, saying that we \emph{make $\alpha$ satisfy the GCI-condition} means that we ensure
\[
\Fout_a(\alpha)
=
\bigcup \left\{ \Fout_\FAG(\alpha,\beta) \;\middle|\; \beta \text{ is available for } b \text{ and } (\alpha,\beta) \text{ is available for } \FAG \right\}.
\]

The same interpretation applies to any available action $\beta$ of $b$.

\item Suppose $\alpha$ is an available action of $a$ and $\beta$ is an available action of $b$. Saying that we \emph{pair $\alpha$ and $\beta$} means that we make $(\alpha,\beta)$ an available action of $\FAG$ and specify its outcome set.

\item 

Let $\alpha$ be an available action of $a$ and $\beta$ an available action of $b$.
Declaring $(\alpha,\beta)$ to be an available joint action with outcome $Z$ is called \Fdefs{safe} (for $\alpha$ and $\beta$) if
\[
Z \subseteq \Fout_a(\alpha)\cap \Fout_b(\beta).
\]
In that case, adding this pair cannot destroy the GCI-condition for $\alpha$ or for $\beta$: if either action already satisfies its GCI-condition before the addition, it still satisfies it afterwards.

\item Any joint action not explicitly declared available will be treated as unavailable, with outcome $\emptyset$.

\end{itemize}

\smallskip

The construction consists of three steps.

\paragraph{Step 1: introducing names for individual powers.}

For every $X\in\Facnei_a$, every $Z\in\Facnei_\FAG$ with $Z\subseteq X$, and every $n \in \{1,2,3\}$, let $\alpha_{n-X-Z}$ be an available action of $a$ such that $\Fout_a(\alpha_{n-X-Z})=X$.

For every $Y\in\Facnei_b$, every $Z\in\Facnei_\FAG$ with $Z\subseteq Y$, and every $n \in \{1,2,3\}$, let $\beta_{n-Y-Z}$ be an available action of $b$ such that $\Fout_b(\beta_{n-Y-Z})=Y$.

The reason that we need three levels of individual actions will be given later.

\paragraph{Step 2: enforcing the GCI-condition for names of individual powers.}

For each $X\in\Facnei_a$ define $\Delta_X=\{Z\in\Facnei_\FAG\mid Z\subseteq X\}$, and for each $Y\in\Facnei_b$ define $\Delta_Y=\{Z\in\Facnei_\FAG\mid Z\subseteq Y\}$. 

\emph{Note that by Fact~\ref{fact:X union Delta X}, $\bigcup \Delta_X = X$ for each $X\in\Facnei_a$ and $\bigcup \Delta_Y = Y$ for each $Y\in\Facnei_b$.}

\begin{enumerate}

\item \textbf{Make all Group~1 names of $a$ satisfy the GCI-condition.}
Pick an available action $\alpha_{1-X-Z}$ of $a$.
For every $Z'\in\Delta_X$, choose some $Y\in\Facnei_b$ such that $Z'\subseteq Y$, and declare $(\alpha_{1-X-Z}, \ab \beta_{2-Y-Z'})$ to be an available action of $\FAG$ with
\[
\Fout_\FAG(\alpha_{1-X-Z},\beta_{2-Y-Z'})=Z'.
\]
\emph{Note that such a $Y$ exists for every $Z'\in\Delta_X$, by actual power inclusion in $\Fln$. Note that for every $Z'\in\Delta_X$, $\beta_{2-Y-Z'}$ is a Group 2 action of $b$.}
\emph{After this, every $Z'\in\Delta_X$ has a name extending $\alpha_{1-X-Z}$, and $\alpha_{1-X-Z}$ satisfies the GCI-condition.}
Repeat this procedure for any remaining Group~1 action of $a$ not yet satisfying the GCI-condition.

\item \textbf{Make all Group~1 names of $b$ satisfy the GCI-condition.}
Proceed analogously for Group~1 actions of $b$ not yet satisfying the GCI-condition.
\emph{Here we only pair such actions of $b$ with Group~2 actions of $a$.}

\item \textbf{Make all Group~2 names of $a$ satisfy the GCI-condition.}
Proceed analogously.
\emph{Here we only pair such actions of $a$ with Group~3 actions of $b$. Note that the pairing is safe.}

\item \textbf{Make all Group~2 names of $b$ satisfy the GCI-condition.}
Proceed analogously.
\emph{Here we only pair such actions of $b$ with Group~3 actions of $a$. Note that the pairing is safe.}

\item \textbf{Make all Group~3 names of $a$ satisfy the GCI-condition.}
Proceed analogously.
\emph{Note that at this point, no Group~1 action of $b$ has been paired with a Group~3 action of $a$.
Here we only pair Group~3 actions of $a$ with Group~1 actions of $b$. Note that the pairing is safe.}

\item \textbf{Make all Group~3 names of $b$ satisfy the GCI-condition.}
Proceed analogously.
\emph{Note that at this point, no Group~1 action of $a$ has been paired with a Group~3 action of $b$.
Here we only pair Group~3 actions of $b$ with Group~1 actions of $a$. Note that the pairing is safe.}

\end{enumerate}

\emph{We now explain why three levels of individual actions are used. The idea behind Step~2 is to make the names of individual powers satisfy the GCI-condition one level at a time. For a name $\alpha$ of an individual power, this means ensuring that its outcome set is exactly the union of the outcome sets of its available joint extensions. To achieve this for names at one level, we add joint extensions using names of the other agent at the next level: Group~1 names are handled using Group~2 names, and Group~2 names are handled using Group~3 names. If this procedure were continued linearly, it would seem to require a further level to handle Group~3 names. The point of the three-level construction is that the process closes cyclically. When Group~3 names are treated, the corresponding Group~1 names of the other agent are already available and already satisfy their own GCI-condition, and they have not yet been paired with the Group~3 names currently under consideration. We can therefore pair Group~3 names with Group~1 names. These new pairings are safe, so they do not destroy the GCI-condition already achieved for the Group~1 names. Thus no fourth level is needed.}

\paragraph{Step 3: pair names of individual powers whenever possible.}

Pick any available actions $\alpha_{n-X-Z'}$ of $a$ and $\beta_{m-Y-Z''}$ of $b$ (at arbitrary levels $n,m\in\{1,2,3\}$) such that:

\begin{itemize}
\item They have not been paired yet, and
\item There exists $Z\in\Facnei_\FAG$ with $Z\subseteq \Fout_a(\alpha_{n-X-Z'})\cap \Fout_b(\beta_{m-Y-Z''})$.
\end{itemize}

Choose such a $Z$, and declare $(\alpha_{n-X-Z'},\beta_{m-Y-Z''})$ to be an available action of $\FAG$ with
\[
\Fout_\FAG(\alpha_{n-X-Z'},\beta_{m-Y-Z''})=Z.
\]
Repeat for all remaining pairs satisfying the conditions above.

\paragraph{Final local game.}

Finally, define $\Flg = (\FAC,\{\Fav_\FCC\mid \FCC\subseteq\FAG\},\{\Fout_\FCC\mid \FCC\subseteq\FAG\})$ as follows:

\begin{itemize}
\item Let $\FAC$ be the set of all actions specified at Step 1.
\item Let $\Fav_\emptyset=\{\emptyset\}$ and $\Fout_\emptyset(\emptyset)=\bigcup\Facnei_\FAG$.
\item Let $\Fav_a,\Fout_a$ be as specified above; similarly for $\Fav_b,\Fout_b$; and let $\Fav_\FAG,\Fout_\FAG$ be as specified above.
\end{itemize}

We verify the ODA-condition and GCI-condition for $\Flg$.

\begin{itemize}

\item

For the ODA-condition, first consider the nontrivial case. Every
individual action declared available in Step~1 has an outcome
$X\in\Facnei_a$ or $Y\in\Facnei_b$, and every grand-coalition action
declared available in Steps~2 and~3 has an outcome
$Z\in\Facnei_\FAG$. All these outcome sets are nonempty by liveness.
Moreover,
\[
\Fout_\emptyset(\emptyset)=\bigcup\Facnei_\FAG\neq\emptyset,
\]
so $\emptyset\in\Fav_\emptyset$. Conversely, every action or joint
action not explicitly declared available is assigned the empty
outcome. Hence, for every $\FCC\subseteq\FAG$,
\[
\Fav_\FCC
=
\{\sigma_\FCC\in\FJA_\FCC
  \mid \Fout_\FCC(\sigma_\FCC)\neq\emptyset\}.
\]
In the trivial case, all availability sets and outcome sets are empty,
so the same equality holds.

\item 

For the GCI-condition, the case $\FCC=\FAG$ is immediate, since a
grand-coalition action has only itself as a grand-coalition extension.
Now let $\alpha_{n-X-Z}\in\Fav_a$. In Step~2, extensions of this action
are introduced whose outcome sets cover every member of
\[
\Delta_X=\{Z'\in\Facnei_\FAG\mid Z'\subseteq X\}.
\]
Since $\bigcup\Delta_X=X$ by Fact~\ref{fact:X union Delta X}, the union
of the outcomes of these extensions is $\Fout_a(\alpha_{n-X-Z})=X$.
Every extension added in Step~3 is safe, and therefore has its outcome
contained in $X$; such extensions do not change this union. The
argument for available actions of $b$ is symmetric. If an individual
action is unavailable, no available grand-coalition action extends it,
and both sides of the GCI equality are empty.

Finally, consider $\FCC=\emptyset$. Every available grand-coalition
action has an outcome in $\Facnei_\FAG$. Conversely, every
$Z\in\Facnei_\FAG$ occurs as the outcome of some available
grand-coalition action: by actual power inclusion, choose
$X\in\Facnei_a$ with $Z\subseteq X$; Step~2 assigns outcome $Z$ to an
extension of $\alpha_{1-X-Z}$. Consequently,
\[
\bigcup_{\sigma_\FAG\in\FJA_\FAG}
\Fout_\FAG(\sigma_\FAG)
=
\bigcup\Facnei_\FAG
=
\Fout_\emptyset(\emptyset).
\]
Thus the GCI-condition holds for every coalition. Hence $\Flg$
satisfies the defining conditions of a local general concurrent game
frame.

\end{itemize}

Hence $\Flg$ is a game.

\paragraph{We show that $\Flg$ is $\PAC$-represented by $\Fln$.}

It suffices to show that for every $\FCC \subseteq \FAG$,
\[
\{\Fout_\FCC(\sigma_\FCC)\mid \sigma_\FCC\in\Fav_\FCC\} \;=\; \Facnei_\FCC.
\]
The inclusion ``$\subseteq$'' is immediate from the construction.
For $\FCC=\emptyset$, by the definition of the final local game, we have the following equalities: 
$
\begin{aligned}
\{\Fout_\emptyset(\sigma_\emptyset)
  \mid \sigma_\emptyset\in\Fav_\emptyset\}
=
\{\Fout_\emptyset(\emptyset)\}
=
\left\{\bigcup\Facnei_\FAG\right\}
=
\Facnei_\emptyset,
\end{aligned}
$
where the last equality follows from
Fact~\ref{fact:empty-coalition-power}.
For $\FCC=\{a\}$ or $\FCC=\{b\}$, the reverse inclusion is also immediate from Step~1.
It remains to show the reverse inclusion for $\FCC=\FAG$, i.e., that every $Z\in\Facnei_\FAG$ has a name.

Let $Z\in\Facnei_\FAG$. By actual power inclusion, $Z\subseteq X$ for some $X\in\Facnei_a$, hence $Z\in\Delta_X$.
Consider the action $\alpha_{1-X-Z}$ introduced in Step~1.
In Step~2 we choose some $Y\in\Facnei_b$ with $Z\subseteq Y$ and declare $(\alpha_{1-X-Z},\beta_{2-Y-Z})$ to be an available action of $\FAG$ with
$\Fout_\FAG(\alpha_{1-X-Z},\beta_{2-Y-Z})=Z$.
Thus $Z$ has a name, as required.
Therefore, $\Flg$ is $\PAC$-represented by $\Fln$.

\paragraph{We show that $\Flg$ has the properties corresponding to $\FXX$.}

\begin{itemize}
\item \textbf{Seriality.}
Assume $\mathtt{S}\in\FXX$. Then $\FACNF$ is $\PAC$-serial, hence $\Facnei_\FCC\neq\emptyset$ for all $\FCC\subseteq\FAG$.
In particular, pick $X\in\Facnei_a$ and choose $Z\in\Facnei_\FAG$ with $Z\subseteq X$.
Then $\alpha_{1-X-Z}$ is available for $a$, so $\Fav_a\neq\emptyset$.
Similarly, $\Fav_b\neq\emptyset$.
Also, since $\Facnei_\FAG\neq\emptyset$, there exists $\sigma_\FAG\in\Fav_\FAG$ with $\Fout_\FAG(\sigma_\FAG)\in\Facnei_\FAG$, hence $\Fav_\FAG\neq\emptyset$.
Finally, $\Fav_\emptyset=\{\emptyset\}$ by definition. Therefore $\Fav_\FCC\neq\emptyset$ for all $\FCC\subseteq\FAG$.

\item \textbf{Independence.}
Assume $\mathtt{I}\in\FXX$. Then $\FACNF$ is $\PAC$-independent. Let $\FCC,\FDD\subseteq\FAG$ with $\FCC\cap\FDD=\emptyset$, and let
$\sigma_\FCC\in\Fav_\FCC$ and $\sigma_\FDD\in\Fav_\FDD$.
We show that $\sigma_\FCC\cup\sigma_\FDD\in\Fav_{\FCC\cup\FDD}$.

If $\FCC=\emptyset$ or $\FDD=\emptyset$, the claim is immediate.
Otherwise, since $\FAG=\{a,b\}$, both $\FCC$ and $\FDD$ are singletons; without loss of generality let $\FCC=\{a\}$ and $\FDD=\{b\}$.
Write $\sigma_a=\alpha$ and $\sigma_b=\beta$, and let $\Fout_a(\alpha)=X$ and $\Fout_b(\beta)=Y$.

By $\PAC$-independence of $\FACNF$, there exists
$Z\in\Facnei_\FAG$ such that $Z\subseteq X\cap Y$.

If $(\alpha,\beta)$ was already declared available in Step~2, then
$(\alpha,\beta)\in\Fav_\FAG$. Otherwise, the existence of such a $Z$ shows
that the unpaired actions $\alpha$ and $\beta$ satisfy the criterion
of Step~3. Hence Step~3 declares $(\alpha,\beta)$ available and
assigns it some outcome $Z'\in\Facnei_\FAG$ with
$Z'\subseteq X\cap Y$. Thus, in either case,
$
\sigma_a\cup\sigma_b
=
(\alpha,\beta)
\in\Fav_\FAG
=
\Fav_{\FCC\cup\FDD}.
$
Therefore $\Flg$ satisfies independence.

\item \textbf{Determinism.}
Assume $\mathtt{D}\in\FXX$. Then $\FACNF$ is $\PAC$-deterministic, so every $Z\in\Facnei_\FAG$ is a singleton.
By construction, for every $\sigma_\FAG\in\Fav_\FAG$ we have $\Fout_\FAG(\sigma_\FAG)\in\Facnei_\FAG$, hence $\Fout_\FAG(\sigma_\FAG)$ is a singleton.
\end{itemize}

\end{proof}

\paragraph{Remark}

The conditions used above remain necessary for arbitrary finite agent
sets, but they cease to be sufficient once three agents are present.
Actual power inclusion and actual power decomposition constrain powers
only along inclusions of coalitions. With three agents, however,
incomparable coalitions may overlap. For example, the restrictions of
a full joint action to $\{a,b\}$ and $\{a,c\}$ must agree on their
common $a$-component; consequently, their outcome sets must both be
contained in the outcome set induced by the same action of $a$. The
conditions used here do not ensure that the witnesses along these two
inclusion chains can be chosen compatibly. This obstruction does not
arise with two agents, since no two incomparable proper nonempty
coalitions have a nonempty intersection. Thus, a general finite-agent
representation theorem requires additional coherence conditions for
the actual powers of overlapping coalitions. A concrete three-agent
counterexample is given in the full paper~\cite{chen-representation-2026}.

\section{Concluding remarks}
\label{section: Concluding remarks}

Concurrent game frames are the most widely used semantic frames for logics of strategic reasoning. Li and Ju~\cite{li-minimal-2025, li-completeness-2026} argued that standard concurrent game frames rely on three assumptions that may be overly strong, namely seriality, independence of agents, and determinism. They therefore proposed eight classes of general concurrent game frames, distinguished by which of these three properties are satisfied.
In this work, under the assumption that there are two agents, we showed that with respect to actual powers, the eight classes of general concurrent game frames are represented by eight corresponding classes of actual neighborhood frames. This generalizes the result in \cite{benthem-new-2019}, which proves a partial representation theorem, for singleton coalitions only and not for the empty or grand coalitions, for the most standard case.

The preceding remark identifies the obstruction to extending the
enoughness direction beyond two agents. A natural next step is to
formulate appropriate general finite-agent coherence conditions and seek a genuine general finite-agent representation theorem
for actual powers.

\section*{Acknowledgments}

We thank the three anonymous reviewers of the 17th International Symposium on Games, Automata, Logics, and Formal Verification for their valuable comments and constructive suggestions.
Fengkui Ju was supported by ``the Fundamental Research Funds
for the Central Universities'' in China (No. 1253200057).
Thomas {\AA}gotnes was supported by Major Project of National Social Science Foundation of China (Grant No. 24\&ZD227).

\bibliographystyle{eptcs}
\bibliography{Strategy-reasoning}

\end{document}